 \newtheorem{prop}{Proposition}[section]
 \newtheorem{definition}[prop]{Definition}
\renewcommand{\hat}{\widehat}
\newcommand{\bref}[1]{\textbf{\ref{#1}}}
\newcommand{\p}[1]{|#1|}
\newcommand{\gh}[1]{\mathrm{gh}(#1)}
\newcommand{\dv}{\mathrm{d_v}}
\newcommand{\dx}{\mathrm{d_X}}
\renewcommand{\d}{\partial}
\renewcommand{\dh}{\mathrm{d_h}}
\newcommand{\binner}[2]{%
  {\langle}\kern-4.15pt{\langle}#1{,}\,#2{\rangle}\kern-4.15pt{\rangle}}
\newcommand{\commut}[2]{[#1{,}\,#2]}
\newcommand{\pb}[2]{\left\{{}#1{},{}#2{}\right\}}
\newcommand{\ab}[2]{\big(#1,#2\big)}
\newcommand{\half}{\mathchoice{%
    \ffrac{1}{2}}{\frac{1}{2}}{\frac{1}{2}}{\frac{1}{2}}}
\newcommand{\ffrac}[2]{\raisebox{.5pt}%
  {\footnotesize$\displaystyle\frac{#1}{#2}$}\kern1pt}
\newcommand{\tot}{\mathrm{tot}}
\newcommand{\dl}[1]{\mathchoice{\ffrac{\d}{\d #1}}{\frac{\d}{\d #1}}{\ffrac{\d}{\d #1}}{\ffrac{\d}{\d #1}}}
\newcommand{\st}[2]{{\overset{#1}{#2}}}
\newcommand{\Liealg}{\mathfrak} 
\newcommand{\algg}{\Liealg{g}}
\newcommand{\CC}{\mathcal{C}}
\newcommand{\cC}{\mathcal{C}}
\newcommand{\fR}{\mathbb{R}}
\newcommand{\fZ}{\mathbb{Z}}
 \def\cE{\mathcal{E}}
 \def\cG{\mathcal{G}}
 \def\cI{\mathcal{I}}
\def\cK{\mathcal{K}}
 \def\cL{\mathcal{L}}
 \def\cP{\mathcal{P}}
\def\tr{{\rm Tr}}
\newcommand{\hL}{\cL}
\newcommand{\Ruth}[1]{#1}
\newcommand{\pr}{\mathrm{pr}}
\begin{document}
\title{Presymplectic gauge PDEs and Lagrangian BV formalism beyond jet-bundles \vspace{0.5cm}}

\author{Maxim Grigoriev\vspace{0.2cm}}
\address{Lebedev Physical Institute,
  Leninsky Ave. 53, 119991 Moscow, Russia \vspace{1em}
  \\
  \newline
  \phantom{aa~}
  Institute for Theoretical and Mathematical Physics, \newline
  \phantom{aa~} Lomonosov Moscow State University, 119991 Moscow, Russia}  %\vspace{1em}

  \subjclass[2020]{
  Primary: 35,70,49; Secondary: 53, 83}
  \keywords{Gauge theories, Batalin-Vilkovisky formalism, geometry of PDEs, AKSZ construction, supergeometry, presymplectic structures}

  \email{grig@lpi.ru}
%\author{Maxim Grigoriev}

%\renewcommand\Affilfont{\itshape\small}
\date{~}

\numberwithin{equation}{section}

\begin{abstract}
A gauge PDE is a geometrical object underlying what physicists call a local gauge field theory
defined at the level of equations of motion (i.e. without specifying Lagrangian) in terms of Batalin-Vilkovisky (BV) formalism. This notion extends the BV formulation in terms of jet-bundles on the one hand and the geometrical approach to PDEs on the other hand. In this work we concentrate on gauge PDEs equipped with a compatible presymplectic structure and show that under some regularity conditions this data defines a jet-bundle BV formulation. More precisely, the BV jet-bundle arises as the symplectic quotient of the super jet-bundle of the initial gauge PDE. In this sense, presymplectic gauge PDEs give an invariant geometrical approach to Lagrangian gauge systems, which is not limited to jet-bundles.  Furthermore, the presymplectic gauge PDE structure naturally descends to space-time submanifolds (in particular, boundaries, if any) and, in this respect, is quite similar to AKSZ sigma models which are long known to have this feature. We also introduce a notion of a weak presymplectic gauge PDE, where the nilpotency of the differential is replaced by a presymplectic analog of the BV master equation, and show that it still defines a local BV system. This allows one to encode BV systems in terms of finite-dimensional graded geometry, much like the AKSZ construction does in the case of topological models.
\vspace{0.2cm}
\end{abstract}

\maketitle
\tableofcontents
\section{Introduction}

Modern theories of fundamental interactions such as  Yang-Mills theory, Einstein gravity, and their modifications are local gauge field theories or, at least, can be described by such theories in certain regimes. 

At the classical level such systems are partial differential equations (PDE), which in addition are required to be variational (Lagrangian) and to possess a family of variational symmetries locally parametrized by unconstrained space-time functions (gauge parameters). A powerful method of handling such systems is the Batalin-Vilkovisky (BV) formalism~\cite{Batalin:1981jr,Batalin:1983wj,Batalin:1984ss} (see also \cite{Henneaux:1992ig,Gomis:1995he} for a review) that extends usual Lagrangian formalism by incorporating the structure of the gauge symmetries into the extended configuration space which, in turn, naturally happens to be a differential graded symplectic manifold. 

The version of the BV approach, which explicitly takes care of space-time locality, is also well known by now, see e.g.~\cite{Barnich:1995db,Barnich:1995ap} and \cite{Barnich:2000zw} for a review and \cite{Barnich:2004cr,Barnich:2010sw,Kaparulin:2011xy,Sharapov:2016sgx} for more recent developments. In this case, the BV configuration space is a jet-bundle, and in addition to the BV extension of the Lagrangian form and the BV symplectic structure, it comes equipped with the canonical Cartan distribution. This data also gives rise to a natural cohomology theory, the so-called local BRST cohomology, which happens to encode the invariant properties of the system and has direct physical meaning (global symmetries, consistent interactions, anomalies, counterterms etc.).

Although, in the geometrical theory of PDEs, the jet-bundle approach appears to be quite useful, it is not sufficiently flexible and is not invariant in the sense that one and the same PDE can be realised in terms of different jet-bundles. An invariant approach was developed by Vinogradov and his school and it deals with the infinitely-prolonged equations seen
(in the simplest version) as bundles endowed with a Cartan distribution~\cite{Vinogradov:1977,Vinogradov:1978,Vinogradov:1984}, see also~\cite{Krasil'shchik:2010ij,Krasil?shchik-Lychagin-Vinogradov} for a review. 

Similar ideas can be applied to
local gauge theories described in terms of BV formalism. First of all, forgetting about symplectic structure gives a version of the jet-bundle BV approach applicable to not necessarily variational PDEs with a gauge structure~\cite{Barnich:2004cr,Lyakhovich:2004xd}. Furthermore, such a BV formulation can be extended beyond jet-bundles, giving a more flexible and invariant approach~\cite{Barnich:2010sw}. This approach was made more general and geometrical in~\cite{Grigoriev:2019ojp}, where the respective geometrical object was called a gauge PDE and its relation to the Vinogradov's notion of diffiety was made explicit. In particular, if a usual PDE is considered as a gauge PDE then the respective diffiety is obtained as its minimal model, i.e. by maximal equivalent reduction of the initial gauge PDE.

The notion of gauge PDE is not a straightforward merging of the notion of PDE and BV system  (that would be just a PDE in the differential-graded category; this is known as a standard gauge PDE~\cite{Grigoriev:2019ojp}). Instead, the Cartan distribution is somehow united with the gauge distribution encoded in the analog of the BV-BRST differential so that there is only one $\fZ$-degree and one differential $Q$. This generalization turns out to be crucial in describing the frame-like formulation of gauge theories or, equivalently, gauge theories based on Cartan geometry. In particular, this approach includes, as a special case, so-called unfolded formalism~\cite{Vasiliev:1988xc,Vasiliev:2005zu} which is based on representing equations of motion as a free differential algebra~\cite{Sullivan:1977fk}. In fact, the main ingredients of the gauge PDE construction were initially proposed in~\cite{Barnich:2004cr,Barnich:2010sw} as a tool to generate unfolded formulations and to relate them to the BV-BRST approach. In this context it is also worth mentioning the group-geometric approach to supergravity~\cite{DAuria:1982nx}, which also employs free differential algebras.

In field theory applications we are mostly interested in Lagrangian systems. However, by its very definition Lagrangian is defined on a jet-bundle and hence cannot be directly encoded in the gauge PDE geometry. A natural question is to extend the gauge PDE approach to Lagrangian systems. It turns out that Lagrangian can often be (but not always is) encoded in the compatible graded presymplectic structure defined on the gauge PDE. In the case of a usual variational PDE, this presymplectic structure is precisely the presymplectic structure defined on the equation manifold~\cite{Kijowski:1979dj,Crnkovic:1986ex,Zuckerman:1989cx,Andersonbook} (see also \cite{Khavkine2012,Sharapov:2016qne,Druzhkov:2021} for more recent developments relevant in the present context). More precisely, given an equation manifold equipped with a compatible presymplectic structure, it defines a natural action functional~\cite{Grigoriev:2016wmk,Grigoriev:2021wgw}, the so-called intrinsic action, defined on sections of the equation manifold (seen as a bundle over the space-time manifold). The structure of the intrinsic action originates from that of  presymplectic AKSZ sigma-models~\cite{Alkalaev:2013hta} which, in turn, generalize 
 AKSZ sigma models~\cite{Alexandrov:1995kv} (see also~\cite{Cattaneo:1999fm,Grigoriev:1999qz,Batalin:2001fc,Cattaneo:2001ys,Roytenberg:2002nu,
  Bonechi:2009kx,Barnich:2009jy,Bonavolonta:2013mza,Ikeda:2012pv} for further developments), initially proposed as an elegant supergeometrical construction of the BV formulation of topological models. 
  
The relation to AKSZ approach suggests that the intrinsic action construction can be extended to gauge PDEs. This was shown to be indeed the case in~\cite{Grigoriev:2020xec} in the case of Einstein gravity but the construction extends to generic diffeomorphism-invariant gauge theories. Extension to general gauge PDEs was recently proposed in~\cite{Dneprov:2022jyn}. It is also worth mentioning recent application of presymplectic AKSZ sigma models to constructing Lagrangians for higher spin gravity~\cite{Sharapov:2021drr}.

In this work we study presymplectic gauge PDEs, i.e. gauge PDEs equipped with compatible presymplectic structure. We apply the efficient technique of super jet-bundles to prove that a presymplectic gauge PDE indeed gives rise to a local BV system. Moreover, it defines a tower of symplectic structures of descending degree and a tower of compatible charges of which the top one is a Lagrangian itself. This is analogous to the case of AKSZ sigma models. In particular these symplectic structures and charges can be integrated over the submanifolds of the respective dimension. This shows that just like AKSZ models presymplectic gauge PDEs naturally induce the analogous structure on space-time submanifolds and/or boundaries.

We also introduce a concept of weak presymplectic gauge PDE, which is obtained by replacing the nilpoltency condition $Q^2=0$ with $\omega(Q,Q)=0$ condition. Remarkably, this does not affect the entire construction and, again, the local BV system naturally emerges on a suitable symplectic quotient. This generalization appears useful in applications  because it often allows one to replace genuine gauge PDE with a finite-dimensional weak analog which still encodes all the information on the system. The example of Yang-Mills theory in this formulation is explicitly constructed.

The paper is organized as follows: in Section~\bref{sec:prelim} we recall how gauge systems are described in terms of graded geometry. This discussion is extended to local gauge theories in Section~\bref{sec:gPDE}. This is done in terms of so-called gauge PDEs, which are also briefly recalled. The structure of presymplectic gauge PDEs is studied in Section~\bref{sec:presymp-gauge-PDE}. In Section~\bref{sec:weak} we introduce a weak analog of presymplectic gauge PDE and show that it still encodes a local BV system. The construction is illustrated with the example of Yang-Mills theory. Finally, in Section~\bref{sect:boundary} we demostrate that presymplectic gauge PDE structure naturally descents to space-time submanifolds and boundaries and again consider Yang-Mills theory as an example.

\section{Preliminaries}
\label{sec:prelim}

\subsection{$Q$-manifolds}
What we are after is a geometric setup underlying classical gauge systems. If we temporarily disregard additional structures such as space-time locality and Lagrangian, the data of a gauge system consists of a manifold of field configurations together with a submanifold singled out by the equations of motion, a gauge distribution that is involutive on the submanifold, and the data defining gauge for gauge symmetries and their higher analogs.  All these can be conveniently packed into the structure of $Q$-manifold:
\begin{definition}
$Q$-manifold $(M,Q)$ is a $\fZ$-graded supermanifold equipped with a homological vector field $Q$, i.e. a vector field of degree $1$ satisfying $Q^2=0$, $\gh{Q}=1$, $\p{Q}=1$, where $\gh{\cdot}$ denotes $\fZ$-degree (often called ghost degree) and $\p{\cdot}$ Grassmann parity. 
\end{definition}
It is probably more appropriate to call it dg manifold  but we follow the tradition initiated in~\cite{Schwarz:1992gs,Schwarz:1992nx}.  For simplicity,
we restrict ourselves to the case where fermionic fields and/or fermionic gauge parameters are not present so that $\fZ_2$-degree  (Grassmann parity) is determined by $\fZ$-degree, i.e. given by $\fZ$-degree mod 2.

\subsection{$Q$-manifolds as gauge systems}

It is instructive to see how the information about the underlying gauge system can be extracted from the data of a $Q$-manifold. To be more precise, by a gauge system we temporarily mean a system whose space-time manifold is 0-dimensional. The considerations still apply to generic systems but then one should disregard all the space-time related issues such as locality, space-time geometry, etc. and is forced to work with, in general,  infinite-dimensional manifolds of field configurations.

Given a $Q$-manifold $(M,Q)$, its body (which can be seen as a submanifold obtained by setting to zero all the coordinates of nonvanishing  degree) is a space of "field configurations". The submanifold singled out by the equations of motion (its points are "solutions" of the underlying system) is the intersection of the zero-locus of $Q$ with the body of $M$. It is useful to give a more invariant characterization. Namely, the body of $M$ can be seen as the space of maps from a point (considered as a $Q$-manifold with the trivial $\fZ$-grading and the trivial $Q$-structure) to $M$. Then solutions (points where $Q$-vanishes) can be identified as $Q$-maps, i.e. maps satisfying $\sigma^* \circ Q=0$. In the same language the gauge transformations (the  gauge distribution) can be defined as follows:
$\delta_\xi \sigma^*=\xi_\sigma^* Q$, where $\xi_\sigma^*$ is a "gauge parameter map" $\cC^\infty (M) \to \fR$ satisfying $\gh{\xi_\sigma^*}=-1$
and $\xi_\sigma^*(fg)=\xi_\sigma^*(f)\sigma^*(g)+(-1)^{\p{f}}\xi_\sigma^*(f)\sigma^*(g)$. In a similar way one can define gauge for gauge symmetries.

It is useful to represent gauge transformations in terms of vector fields defined on $M$. Given a vector field $Y$ of degree $-1$ it defines an infinitesimal symmetry $V=\commut{Q}{Y}$ of the Q-manifold. Because $\gh{V}=0$ it induces an infinitesimal transformation of the body of $M$, which is a gauge symmetry. Indeed, taking $\xi_\sigma=\sigma^* \circ Y$ one finds that the associated gauge symmetry is precisely the one determined by $V$.

It is also instructive to characterise solutions and gauge transformations in coordinate terms. For simplicity let us restrict ourselves to the situation where only coordinates $\cP_a,\phi^i,C^\alpha$ of degree $-1,0,1$ respectively are present. Generic $Q$ of degree $1$ then reads as:
\begin{equation}
Q=E_a(\phi)\dl{\cP_a}+C^\alpha R_\alpha^i(\phi)\dl{\phi^i}+\text{terms quadratic in $C,\cP$}\,.
\end{equation}
For a given map $\sigma$ one finds that $\sigma^*$ acts trivially on $C,\cP$ so
that the only nontrivial condition is $\sigma^*(Q \cP_a)=0$, giving $\sigma^*(E_a)=0$. In other words, solutions are  points where $E_a(\phi)$ vanish  so that  relations $E_a(\phi)=0$ are indeed interpreted as equations of motion.

Let us turn to gauge symmetries, a generic gauge parameter map $\xi_\sigma^*$ of degree $-1$ is determined by $\xi_\sigma^*(c^\alpha)=\epsilon^\alpha \in \fR$. The gauge variation of $z^i=\sigma^*\phi^i$ is then given by 
\begin{equation}
 \delta_{\xi_\sigma} z^i=\xi^*_\sigma( Q \phi^i)  = R_\alpha^i(\phi^i)\epsilon^\alpha
\end{equation}
In other words vector fields $R_\alpha^i(\phi)\dl{\phi^i}$ define a gauge distribution. It is involutive on the surface $E_a=0$ as a consequence of $Q^2=0$. 

A natural question is when two gauge systems should be considered equivalent. If the gauge invariance is not present it is natural to call systems equivalent if they have isomorphic spaces of solutions. Situation is slightly more tricky when gauge symmetry is present as some of the variables can be gauged away by passing to the space of gauge orbits and one should compare the quotients. A subtlety is that gauge symmetries may act in a nontrivial way as not all gauge transformations can be represented as shift transformations of the form $\delta z = \epsilon$, with $z$ being a global coordinate. These vague considerations lead to the following notion of equivalence for $Q$-manifolds:
\begin{definition}\cite{Grigoriev:2019ojp}
    1. A contractible $Q$-manifold is that of the form $(T[1]W,\mathrm{d_W})$ where $W$ is a graded vector space considered as a graded manifolds and $\mathrm{d_W}$ is the de Rham differential on $W$ seen as a $Q$-structure on $T[1]W$.
    
    2. A $Q$-manifold $(N,q)$ is called an equivalent reduction of $(M,Q)$ if $(M,Q)$ is a locally-trivial $Q$-bundle over $(N,q)$ which admits a global $Q$-section and whose fibers are contractible $Q$-manifolds. The equivalent reduction generates the equivalence relation for $Q$-manifolds.
\end{definition}
Recall that $Q$-bundle $\pi:(M,Q)\to (N,q)$ is a locally- trivial bundle of graded manifolds such that $\pi^*\circ  q=Q \circ \pi^*$ \cite{Kotov:2007nr}. A $Q$-bundle $\pi$ is called locally-trivial if it is locally isomorphic, as a $Q$-manifold,  to the product of the $(N,q)$ and a typical fiber $(F,q^\prime)$, i.e. $Q=q+q^\prime$ in the adapted coordinates.

The above notion of the equivalent reduction is a globalization of the  well-known, in the context of local BRST cohomology, concept of eliminating contractible pairs, see e.g. \cite{Barnich:1995ap,Brandt:1996mh}. 
In the field theory context the appropriate extension of the above equivalence was put forward in~\cite{Barnich:2004cr},  which in turn is a nonlagrangian version of the equivalence through the elimination of so-called generalised auxiliary fields~\cite{Henneaux:1990ua,Dresse:1990dj}.

\section{Local gauge theories as gauge PDEs}
\label{sec:gPDE}

\subsection{Gauge PDE}
\begin{definition}
Gauge PDE $(E,Q,T[1]X)$ is a $\fZ$-graded fiber bundle $\pi: E\to T[1]X$  whose total space is equipped with a homological vector field $Q$ of degree $1$ satisfying: $Q \circ \pi^*=\pi^* \circ \dx$
and $\dx$ is the de Rham differential on $X$ seen as a homological vector field on $T[1]X$. Moreover, $(E,Q,T[1]X)$ should be equivalent to a non-negatively graded gauge PDE.
\end{definition}
The last condition means that the surface  singled out by the equations of motion encoded in $Q$ (see below) is itself a smooth fiber bundle and that the role of all the negative degree variables
is to provide a resolution of this surface. In the usual jet-space BV formulation this condition corresponds to the acyclicity of the Koszul-Tate differential contained in $Q$. One often also imposes some extra conditions of a technical nature to exclude pathological examples.

We assume that $X$ is a real manifold (space-time manifold) while $E$ (and $T[1]X$) is a $\fZ$-graded one with the degree denoted by $\gh{\cdot}$. Note that gauge PDEs are special cases of $Q$-bundles~\cite{Kotov:2007nr}. However, gauge PDEs are necessarily $\fZ$-graded and are typically infinite-dimensional. Further details and references can be found in~\cite{Grigoriev:2019ojp} (see also \cite{Barnich:2010sw,Grigoriev:2010ic,Grigoriev:2012xg} for the earlier and less-geometrical version).

By definitionm, a field configuration $\sigma$ (i.e. a section $\sigma:T[1]X \to E$) is a solution if 
\begin{equation}
\label{aksz-eom}
    d_X \circ \sigma^*=\sigma^* \circ Q\,.
\end{equation}
The infinitesimal gauge transformation of the configuration $\sigma$ is given by
\begin{equation}
\label{aksz-gs}
    \delta\sigma^*=d_X \circ \xi_\sigma^*+\xi_\sigma^* \circ Q\,
\end{equation}
where $\xi_\sigma^*:\CC^\infty(E)\to \CC^\infty(T[1]X)$ is a gauge parameter map, which has a degree $-1$ and satisfies
\begin{equation}
\label{gp-cond}
    \xi_\sigma^*(fg)=(\xi_\sigma^*(f)\sigma^*(g)+(-1)^{\p{f}}\sigma^*(f)\xi_\sigma^*(g)\,.
\end{equation}
In a similar way one defines gauge for gauge symmetries. It can be also useful to parameterize gauge parameters in terms of ghost degree $-1$ vector fields $Y$ on 
$E$ as follows: $\xi_\sigma=\sigma^*\circ Y$, 
where $Y$ is understood as a derivation
of the algebra of functions on $E$. 
It is easy to check that condition~\eqref{gp-cond} is indeed satisfied. 

The notion of equivalence naturally extends from Q-manifolds to Q-bundles~\cite{Grigoriev:2019ojp}. An equivalent reduction of a gauge PDE $(E,Q,T[1]X)$ is a $Q$-subbundle $(G,Q_G,T[1]X)$ of $E$ such that the initial gauge PDE is a locally-trivial $Q$-bundle over $G$ with a contractible fiber. Moreover, the projection $E\to G$ is also a $Q$-map of $Q$-bundles over $T[1]X$. 

\subsection{Examples}

\subsubsection{PDE}
The first example is the usual PDE. Let $E_X \to X$ be an equation manifold seen as a bundle over the space-time $X$. Algebra of horizontal forms on $E_X$
can be identified with the algebra of functions on
$E$ which is $E_X$ pulled back to $T[1]X$ by the canonical projection $T[1]X\to X$. Under this identification the horizontal differential becomes a homological vector field on $E$, which by some abuse of conventions we also denote by $\dh$. In a suitable local coordinate system $x^a,\theta^a,\psi^A$, where $x^a, \theta^a$ are standard coordinates on $T[1]X$ pulled back to $E$, and $\psi^A$ originate from local coordinates on a typical fiber, one has:
\begin{equation}
    \dh=\theta^a D_a\,, \qquad D_a=\dl{x^a}-\Gamma_a^A(x,\psi)\dl{\psi^A}\,,
\end{equation}
where $D_a$ are components of the total derivative on the equation manifold. Note that this is a generic form of the $Q$-structure on $E$ if the typical fiber is a real manifold (so that $\gh{\psi^A}=0$). Using these coordinates it is easy to check that  solutions of $(E,\dh,T[1]X)$ are indeed the covariantly-constant (with respect to the Cartan distribution seen as an  Ehresmann connection) sections in agreement with the standard picture, see e.g.~\cite{Krasil?shchik-Lychagin-Vinogradov,Krasil'shchik:2010ij}. Note that a closely related representation of PDEs is known under the name of unfolded formulation~\cite{Vasiliev:1988xc,Vasiliev:2005zu}.

\subsubsection{AKSZ sigma model}
Let $X$ be a space-time manifold and $(F,q)$ a generic $Q$-manifold then take 
$(E,Q,T[1]X)$ to be a globally-trivial $Q$-bundle $(E,Q)=(F,q)\times (T[1]X,\dx)$ over $T[1]X$.  This construction is known as AKSZ sigma model and originates from~\cite{Alexandrov:1995kv}, where its Lagrangian version (in which case $F$ is equipped with a symplectic form of degree $\dim{X}-1$) was put forward as a tool to generate elegant BV formulations for topological models.

A typical example is to take $F=\algg[1]$ and $q=d_{CE}$, where $\algg[1]$ is a Lie algebra and $d_{CE}$ its Chevalley-Eilenberg differential. A section $\sigma$ is determined by the 1-form $A^\alpha_a(x)\theta^a=\sigma^*(c^\alpha)$, where $c^\alpha$ are linear coordinates on $\algg[1]$ and $x^a$ local coordinates on $X$. The gauge PDE equations of motion then imply that $\dx A +\half\commut{A}{A}=0$ and the  
gauge transformations are $\delta A=\dx\epsilon+\commut{A}{\epsilon}$.

\subsubsection{BV-formulation at the level of equations of motion}
\label{sec:EOM-BV}
The BV approach to local gauge theories was initially designed for Lagrangian systems in which case the system is defined in terms of a certain extension $S_{BV}$ of the classical action while the $Q$ structure emerges as a Hamiltonian vector field generated by $S_{BV}$ and the odd symplectic structure.  The version of BV for not necessarily Lagrangian systems deals with $Q$ structure from the very start while the symplectic structure and hence $S_{BV}$ is disregarded or not specified, see~\cite{Barnich:2004cr,Lyakhovich:2004xd} for further details.

In the setting of local gauge theories the BV formulation at the level of equations of motion consists of the following data: a fiber bundle $\cE \to X$, where $X$ is the space-time manifolds and the the typical fiber is a $\fZ$-graded manifold. The associated jet-bundle $J^\infty (\cE)$ is again a graded manifold and is equipped with a vertical evolutionary vector field $s$, called BRST differential,  satisfying $\gh{s}=1$ and $s^2=0$. Recall that evolutionary vector fields are those preserving the contact ideal. This condition can be also written as $\commut{D_a}{s}=0$, where $D_a=\dl{x^a}+\ldots$ denote total derivatives on $J^\infty (\cE)$, i.e. the horizontal lifts of the coordinate vector field $\dl{x^a}$ on $X$. 

This data encodes the information on equations of motion, gauge symmetries as well as gauge for gauge symmetries of the underlying system. Indeed, a section $\sigma:X\to \cE$ is called  a solution if its prolongation $\sigma_{\pr}:X\to J^\infty (\cE)$
satisfies $\sigma_{\pr}^*\circ s=0$. While gauge transformations are defined as $\delta \sigma_{\pr}^*=\xi_{\pr}^*\circ s$, where $\xi^*$ is a ghost degree $-1$ map $\cC^\infty(J^\infty (\cE)) \to \cC^\infty(X)$ satisfying $\xi^*(fg)=\xi^*(f)\sigma^*(g)+(-1)^{\p{f}}\xi^*(f)\sigma^*(g)$ for all $f,g\in \cC^\infty(J^\infty (\cE))$. In a similar way one defines gauge for gauge symmetries.

It is not difficult to construct a gauge PDE out of this data. Namely, take as $E \to T[1]X$ a pull back of $J^\infty(\cE)$ by the canonical projection $T[1]X\to X$. Functions on $E$ can be identified with the algebra of horizontal forms on $J^\infty(\cE)$. In particular, $E$ is equipped with the horizontal differential $\dh$ which can be written as $\theta^a D_a$, where $\theta^a$ are coordinates on the fibers of $T[1]X$. As a $Q$-structure of the total space one takes $Q=\dh+s$. It is easy to check that it satisfies $\pi^*\circ \dx=Q\circ \pi^*$, where $\dx$ is the de Rham differential on $X$ seen as a vector field on $T[1]X$. It was proved in~\cite{Barnich:2010sw} (see also~\cite{Barnich:2004cr} for the earlier but less general statement and \cite{Grigoriev:2019ojp} for a more geometrical exposition) that, at least locally, the gauge system encoded in $(E,Q,T[1]X)$ is an equivalent reformulation of the starting point BV-system.

\section{Presymplectic gauge PDEs}
\label{sec:presymp-gauge-PDE}

In this section we demonstrate  how generic Lagrangian local gauge field theories can be described by gauge PDEs equipped with compatible presymplectic structures. 

\subsection{Lagrangian BV system}
\label{sec:lagBVjet}

We begin by recalling the Lagrangian BV formulation of local gauge field theories in terms of jet-bundles, for a review see e.g.~\cite{Barnich:2000zw} and references therein.  
As in Section~\bref{sec:EOM-BV} we assume that $J^\infty(\cE)$ is equipped with a ghost grading and an evolutionary vector field $s$, satisfying $s^2=0,\gh{s}=1$. Now let, in addition, $\cE$ be equipped with a closed $(n+2)$-form of degree $-1$ and of the following structure:
\begin{equation}
\label{sympcE}
\omega^\cE=(dx)^n\omega_{AB}(x,\psi^A)d\psi^A d\psi^B\,,
\end{equation}
with $\omega_{AB}$ invertible. Here $x^a$ are local coordinates on $X$ pulled back to $\cE$ and $\psi^A$ the fiber coordinates. Although the requirement is formulated in coordinate terms it is actually invariant.\footnote{The alternative is to define a vertical 2-form with values in densities on $X$}

The above form defines an $(n,2)$ form $\st{n}{\omega}$ on the associated jet-bundle $J^\infty(\cE)$, which is just a pullback of $\omega^\cE$ to $J^\infty(\cE)$. It terms of standard local coordinates on $J^\infty(E)$ induced by $x^a,\psi^A$ it can be written as
\begin{equation}
    \st{n}{\omega}=(dx)^n\omega_{AB}(x,\psi^A)\dv\psi^A \dv\psi^B\,,
\end{equation}
where  $\dv$ is the vertical differential. Recall that on a jet-bundle de Rham differential decomposes as $d=\dh+\dv$, see e.g.~\cite{Anderson1991,Andersonbook,Dickey:1991xa} for more details on jet-bundles. If in addition $\st{n}{\omega}$ is $s$-invariant modulo a total derivative, i.e.
\begin{equation}
\label{descent}
L_s\st{n}{\omega}+\dh \st{n-1}{\omega}=0\,,
\end{equation}
for some $\st{n-1}{\omega} \in \bigwedge^{n-1,2}(J^\infty(\cE))$ we say that $(J^\infty(\cE),s,\st{n}{\omega})$ defines a local BV system (or local BV formulation). 

To see that the above data indeed encodes BV master-action let us restrict to local analysis so that $\dh$ is acyclic in horizontal form degree $<n$ and $d$  (as well as $\dv$) is acyclic in (vertical) form degree $>0$. 
Using $\dv \st{n}{\omega}=0$ and the fact that $\dv$ mod $\dh$ cohomology is trivial in this degree, one finds:
\begin{equation}
\label{descent2}
i_s \st{n}{\omega}=-\dv \st{n}{\hL}-\dh \st{n-1}\chi\,,
\end{equation}
for some  $\st{n}{\hL} \in \bigwedge^{n,0}(J^\infty(\cE))$ and some $\st{n-1}\chi \in \bigwedge^{n-1,1}(J^\infty(\cE))$. Applying $\dv$ to both sides one finds $L_s \st{n}{\omega}+\dh\dv \st{n-1}\chi=0$ so that $\st{n-1}{\omega}$ can be assumed $\dv$-closed (as we do in what follows).  Note that $\st{n}{\hL}$ is defined up to a total derivative, indeed $\st{n}{\hL}+\dh \st{n-1}{\epsilon}$
and $\st{n-1}\chi+\dv \st{n-1}{\epsilon}$ is again a solution to~\eqref{descent2} for any $\st{n-1}{\epsilon} \in \bigwedge^{n-1,0}(J^\infty(\cE))$.

Furthermore, considering $\dv(i_s i_s \st{n}{\omega})$ one finds (see Appendix for details):
\begin{equation}
\label{master}
 \half i_s i_s \st{n}{\omega}=\dh \st{n-1}{\hL}\,,
\end{equation}
for some $\st{n-1}{\hL}\in \bigwedge^{n-1,0}(J^\infty(\cE))$. Relations~\eqref{descent2} and \eqref{master}
are an equivalent reformulation of the standard axioms of local BV formulation, where $\cL_{BV}\equiv \st{n}{\hL}$ is the integrand of the BV master action and $s$ the associated BRST differential. In particular, \eqref{master} is equivalent to the BV master equation $\half\ab{L_{BV}}{L_{BV}}=\dh(\cdot)$, where the bracket is the odd Poisson bracket determined by $\st{n}{\omega}$. Note that the odd Poisson bracket on local $(n,0)$-forms is not unique and is defined up to a total derivative. See e.g. \cite{Barnich:1997ed} for a more detailed discussion of this point.

In what follows we also need the inverse statement:
\begin{prop}
\label{prop:inv}
Let $J^{\infty}(\cE)$
be equipped with a symplectic structure $\st{n}{\omega}$ determined (as explained above) by a nondegenerate $\omega^\cE$ and a vertical evolutionary vector field $s$, $\gh{s}=1$ (not necessarily nilpotent) such that equations~\eqref{descent2} and \eqref{master} are satisfied for some $\st{n}{\hL}$, $\st{n-1}{\chi}$, and $\st{n-1}{\hL}$.  It follows that $\st{n}{\hL},\st{n}{\omega}$ define a local BV system on $J^{\infty}(\cE)$.
\end{prop}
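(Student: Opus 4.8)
The plan is to reverse the forward computation. By the definition given above, a local BV system is precisely the data $(J^\infty(\cE),s,\st{n}{\omega})$ with $\st{n}{\omega}$ built from a nondegenerate $\omega^\cE$, with $s$ evolutionary of $\gh{s}=1$, with the descent relation \eqref{descent} satisfied, and with $s^2=0$. Since everything except \eqref{descent} and the nilpotency is already among the hypotheses, the whole task is to extract these two facts from \eqref{descent2}, \eqref{master}, and the invertibility of $\omega_{AB}$.

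The descent relation is the easy half. Applying $\dv$ to \eqref{descent2} and using $\dv\st{n}{\omega}=0$, $\dv^2=0$ and $\dh\dv=-\dv\dh$ gives $\dv i_s\st{n}{\omega}=\dh\dv\st{n-1}{\chi}$. Since $s$ is evolutionary, $i_s$ graded-commutes with $\dh$, so the Cartan formula collapses to $L_s\st{n}{\omega}=\dv i_s\st{n}{\omega}$, and \eqref{descent} holds with $\st{n-1}{\omega}=-\dv\st{n-1}{\chi}$.

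Nilpotency is the heart of the matter, and this is where I would spend the effort. The field $s^2$ is again evolutionary, now with $\gh{s^2}=2$, hence determined by its characteristic; moreover $i_{s^2}\st{n}{\omega}$ is a source form, i.e. built only from $(dx)^n$ and the undifferentiated $\dv\psi^B$, with coefficient proportional to $\omega_{AB}(s^2)^A$. By invertibility of $\omega_{AB}$ it therefore suffices to prove $i_{s^2}\st{n}{\omega}=0$. Using the graded identity $i_{[s,s]}=\{L_s,i_s\}$ together with $[s,s]=2s^2$, I would write $2\,i_{s^2}\st{n}{\omega}=L_s(i_s\st{n}{\omega})+i_s(L_s\st{n}{\omega})$ and substitute $i_s\st{n}{\omega}$ from \eqref{descent2} and $L_s\st{n}{\omega}=\dh\dv\st{n-1}{\chi}$ from the previous step. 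Commuting $L_s$ and $i_s$ through $\dh$ and $\dv$, every resulting term is manifestly $\dh$-exact except $\dv L_s\st{n}{\hL}$. This is exactly where \eqref{master} becomes indispensable: contracting \eqref{descent2} once more with $s$ and comparing with \eqref{master} yields $L_s\st{n}{\hL}=\dh(i_s\st{n-1}{\chi}-2\st{n-1}{\hL})$, so that $\dv L_s\st{n}{\hL}$ is $\dh$-exact as well.

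The conclusion then rests on the step I expect to be the main obstacle: $i_{s^2}\st{n}{\omega}$ is simultaneously a source form and $\dh$-exact, and I must invoke the standard property of the variational bicomplex that a $\dh$-exact source form vanishes (equivalently, the interior Euler operator fixes source forms while annihilating $\dh$-exact ones). This is the one place where genuine care with the jet-bundle machinery is needed, possibly together with the local acyclicity assumptions already used in the forward direction, and the signs in the graded Cartan calculus must be checked here. Granting it, $i_{s^2}\st{n}{\omega}=0$, whence $s^2=0$ by nondegeneracy. Combined with \eqref{descent}, this shows that $(J^\infty(\cE),s,\st{n}{\omega})$ is a local BV system whose master-action integrand is $\st{n}{\hL}$, proving the proposition.
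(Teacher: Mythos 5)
Your proof is correct, but it takes a genuinely different route from the paper's. The paper does not attempt to show that the given $s$ is nilpotent: it introduces the Hamiltonian vector field $s'$ defined by $i_{s'}\st{n}{\omega}+\delta^{EL}\st{n}{\hL}=0$, observes that $i_s\st{n}{\omega}-i_{s'}\st{n}{\omega}=\dh(\cdot)$ by \eqref{descent2}, so that \eqref{master} transfers from $s$ to $s'$, and then cites the standard fact (Barnich, hep-th/9709164) that the master equation modulo $\dh$ forces $(s')^2=0$; the local BV system is then built on $s'$. You instead prove directly that $s^2=0$: you compute $i_{s^2}\st{n}{\omega}$ via the graded Cartan calculus, reduce everything to $\dh$-exact terms using \eqref{descent2}, the derived descent relation, and --- crucially, as you correctly identify --- the consequence $L_s\st{n}{\hL}=\dh(\cdot)$ of \eqref{master}, and then invoke the interior-Euler-operator property that a $\dh$-exact source form vanishes. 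This is a stronger conclusion: since $i_s\st{n}{\omega}$ and $\delta^{EL}\st{n}{\hL}$ are both source forms (as $\st{n}{\omega}$ is pulled back from $\cE$), the same vanishing lemma shows $s=s'$, so your argument in effect verifies the cited standard fact in this setting rather than quoting it. What each approach buys: the paper's proof is shorter and outsources the nilpotency mechanism to a known lemma; yours is self-contained and makes explicit that the hypotheses already force nilpotency of the given $s$, at the cost of the sign bookkeeping in the graded identity $i_{[s,s]}=[L_s,i_s]$ (which you flag; note that the relative sign between $L_s i_s$ and $i_s L_s$ depends on the parity conventions, but either sign leaves your argument intact since both terms are separately shown to be $\dh$-exact). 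Your worry about the "main obstacle" is unfounded: the fact that the interior Euler operator annihilates $\dh$-exact $(n,1)$-forms while fixing source forms is an identity of the variational bicomplex and needs no local acyclicity assumptions.
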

\begin{proof}
Let $s^\prime$ be 
an evolutionary vertical vector field determined by $i_{s^\prime}\st{n}{\omega}+\delta^{EL}\st{n}{\hL}=0$, where $\delta^{EL}$ is the Euler-Lagrange differential. This is the standard definition of a Hamiltonian vector field with Hamiltonian $\hL_{BV}$. It is the standard fact (see e.g.~\cite{Barnich:1997ed}) that nilpotency of $s^\prime$ is guaranteed if~\eqref{master}, with $s$ replaced by $s^\prime$, is satisfied for some $\st{n-1}{\hL}$. It is indeed satisfied as $i_s\omega - i_{s^\prime}\omega=\dh(\cdot)$ thanks to~\eqref{descent2} and the definition of $s^\prime$.
\end{proof}

\subsection{Descent-completed BV systems}
\label{sec:descent}
The above Lagrangian BV formulation can be extended in such a way that the symplectic structure $\st{n}{\omega}$ is completed by forms of lower horizontal degree to a cocycle of the total BRST differential $s+\dh$. To avoid discussion of possible obstructions we work locally. 

Using $\dv \st{n-1}{\omega}=0$ one finds that \eqref{descent} implies $\dh L_s \st{n-1}{\omega}=0$. Acyclicity of $\dh$ in horizontal form degree $<n$
then implies the next term in the descent equation:
\begin{equation}
    L_s \st{n-1}{\omega}+\dh \st{n-2}{\omega}=0\,.
\end{equation}
Repeating the same arguments as above gives
\begin{equation}
    i_s\st{n-1}{\omega}=-\dv \st{n-1}{\hL}-\dh \st{n-2}\chi
\end{equation}
for some $\st{n-1}{\hL} 
 \in \bigwedge^{n-1,0}(J^\infty(\cE))$ and $\st{n-2}\chi \in \bigwedge^{n-2,1}(J^\infty(\cE))$, which also implies that $\st{n-2}{\omega}$ can be assumed $\dv$-exact as we do in what follows. Continuing in the same way one finds $L_s \st{n-k}{\omega}+\dh \st{n-k-1}{\omega}=0$ for all $k>1$, which in turn implies the existence of $\st{n-k}{\hL} \in \bigwedge^{n-k,0}(J^\infty(\cE))$ and $ \st{n-k-1}{\chi} \in \bigwedge^{n-k-1,1}(J^\infty(\cE))$ such that
\begin{equation}
\begin{gathered}
\omega^{\tot}=\st{n}\omega+\st{n-1}\omega+\ldots+\st{1}\omega+\st{0}\omega\,, \qquad \hL^\tot=\st{n}{\hL}+
\st{n-1}{\hL}+\ldots+\st{1}{\hL}+\st{0}{\hL}\,,\\
\chi^{\tot}=\st{n}\chi+\st{n-1}\chi+\ldots+\st{1}\chi+\st{0}\chi\,,
\end{gathered}
\end{equation}
with $\st{n}\chi$ defined such that $\dv\st{n}\chi=\st{n}\omega$, satisfy
\begin{equation}
\label{descent-tot}
(\dh+L_s)\omega^{\tot}=0\,, \qquad  i_s \omega^{\tot}=-\dv \hL^\tot-\dh\chi^{\tot}\,, \qquad \omega^{\tot}=\dv \chi^{\tot}\,,
\end{equation}
and
\begin{equation}
\label{master-tot}
\half i_si_s \omega^{\tot}=\dh \hL^\tot\,.
\end{equation}
Note that $\omega^{\tot}, \hL^\tot$, and $\chi^{\tot}$ are inhomogeneous in the horizontal form degree. 

A BV jet-bundle equipped with 
$\omega^{\tot}, \hL^{\tot},\chi^{\tot}$ 
satisfying \eqref{descent-tot} and \eqref{master-tot} is refereed to in what follows as a descent-completed local BV system. Although, as we have just recalled, a usual local BV system can be extended, at least locally, to a descent completed one, such an extension is in general not guaranteed and/or is not unique. In this sense it contains extra data with respect to the local BV formulation reviewed in Section~\bref{sec:lagBVjet}. For instance, $\st{n-1}{\hL}$ and $\st{n-1}{\omega}$ correspond to, respectively, the BRST charge and the symplectic structure of the associated Hamiltonian formulation, see e.g.~\cite{Cattaneo:2012qu,Sharapov:2015wya,Sharapov:2016sgx}.   

In one or another version descent-completed BV systems are known in the literature.  In particular, the total BRST differential $s+\dh$ was actively employed from the early days of the local BRST cohomology, see e.g.~\cite{Stora:1984,Barnich:1995ap,Brandt:1996mh}.
Furthermore, the symplectic structure equivalent to $\omega^{\tot}$ and the equivalent of $\hL^{\tot}$ appeared already in~\cite{Grigoriev:2010ic,Grigoriev:2012xg} (see also a related approach of~\cite{Alkalaev:2013hta} and its developments~\cite{Grigoriev:2016wmk,Grigoriev:2020xec}) within a slightly different formalism.  
In this approach the essential part of $\hL^{\tot}$ is constructed as a descent completion of the Lagrangian. 
Detailed study of the descent completion of (pre)symplectic structures in the jet-bundle BV approach was put forward in~\cite{Sharapov:2015wya,Sharapov:2016sgx}. The notion of descent-completed~\footnote{called "Lax" in \cite{Simao:2021xgw}} BV was fully explicit in~\cite{Simao:2021xgw,Mnev:2019ejh} (see also~\cite{Cattaneo:2012qu}, where a related structure have appeared in a slightly different framework). 

%%%%%%%%%%%%%%%%%%%%%%%%%%%%%%%%%%%%%%%%
%%%%%%%%%%%%%%%%%%%%%%%%%%%%%%%%%%%%%%%%%%%%%%
\subsection{Presymplectic BV systems}
\label{sec:presympBV}
The usual Lagrangian BV formulation recalled in Section~\bref{sec:lagBVjet}
has a  straightforward generalization to the case where the symplectic structure $\omega^\cE$ introduced in~\eqref{sympcE}, is allowed to have a nonvanishing kernel on the vertical tangent vectors but its rank is required to be constant.

It turns out that almost all the considerations of Section~\bref{sec:lagBVjet} remain true. In particular, $\omega^\cE$  determines $\dv$-closed $\st{n}{\omega} \in \bigwedge^{n,2}(J^\infty(\cE))$. If BRST differential $s$ preserves $\st{n}{\omega}$ modulo boundary terms, i.e. equation~\eqref{descent} is satisfied, one finds $\st{n}{\hL} \in \bigwedge^{n,0}(J^\infty(\cE))$ satisfying \eqref{descent2} and \eqref{master} for some $\st{n-1}{\chi}$ and $\st{n-1}{\hL}$.

Let us restrict to local analysis and take a quotient of $\cE$ modulo the distribution of vertical vector fields that are in the kernel of $\omega^\cE$. This can be performed fiberwise, giving a new bundle $\cG$ together with a projection $p:\cE \to \cG$.
$\omega^\cE$ induces a nondegenerate $n+2$ form $\omega^\cG$ on $\cG$, which has the same structure~\eqref{sympcE}.
Recall that locally a symplectic quotient can be realised as a submanifold and in the case at hand as a subbundle (because the distribution is vertical). To simplify the analysis we realise $\cG$ as a subbundle of $\cE$ and hence $J^{\infty}(\cG)$ is also a  subbundle of $J^{\infty}(\cE)$ in a natural way. Pulling back equation~\eqref{descent2} to $J^{\infty}(\cG)$ one finds that it remains true with $s$ replaced by its projection to $J^{\infty}(\cG)$. At the same time it is easy to see that the projection of $s$ to  $J^{\infty}(\cG)$ is evolutionary. It then follows that we arrived at the local BV system on $J^{\infty}(\cG)$ with the symplectic structure $\omega^{\cG}$ and the BV master action being a pullback of $\st{n}{\hL}$ to $J^{\infty}(\cG)$.

The above construction can be also understood in field-theoretical terms. Suppose that $R$ is a vertical  evolutionary vector field from the kernel of $\st{n}{\omega}$, i.e. $i_R \st{n}{\omega}=0$. It follows it defines a gauge symmetry of the BV Lagrangian, Indeed, equation~\eqref{descent2} then implies
\begin{equation}
    L_R \st{n}{\hL}+(-1)^{\p{R}}\dh i_R \st{n-1}{\chi}=0\,.
\end{equation}
From this perspective the passage to $\cG$ can be seen as gauge-fixing these symmetries. Note that one can choose $R$ to be a prolongation of a kernel vertical vector field on $\cE$ in which case the respective gauge transformations are algebraic.
%Taking as $R$ a prolongation of $\dl{\phi}$, where $\phi$ is one of the coordinates 

\subsection{Presymplectic gauge PDEs}\label{sec:pPDE}

Let us first introduce vertical forms on $(E,Q,T[1]X)$ as the quotient of all local forms by the ideal $\cI$ generated by elements of the form $\pi^*\alpha$, where $\alpha$ is a form of positive degree on $T[1]X$ and $\pi$ is the projection $\pi:E\to T[1]X$. In coordinate terms this means that forms proportional to $dx^a$ or $d\theta^a$ are to be considered equivalent to zero. An equivalence class of a representative $\rho\in\bigwedge^{\bullet}(E)$ is denoted by $[\rho]$. The de Rham differential is clearly well defined on the quotient as well as the Lie derivative along the vector fields that project to the base, e.g. vertical vector fields.

The following definition is taken from~\cite{Dneprov:2022jyn} though less general versions were  in~\cite{Grigoriev:2016wmk,Grigoriev:2020xec} while the examples appeared already in~\cite{Alkalaev:2013hta}:
\begin{definition}\label{def:presymppde}
A compatible presymplectic structure on gauge PDE $(E,Q,T[1]X)$ is a vertical two form $[\omega]$ of degree $n-1$, $n=\dim{X}$ satisfying  $[d\omega]=0$, $[L_Q\omega]=0$. A gauge PDE equipped with a compatible presymplectic structure is called presymplectic gauge PDE.
\end{definition}
In what follows we find it useful to work in terms of a fixed representative $\omega$ of $[\omega]$. 

It turns out that under certain regularity conditions a presymplectic gauge PDE defines a local BV system (and in fact a descent-completed BV system as well). To construct it consider a super-jet bundle $SJ^\infty(E)$ of $\pi:E\to T[1]X$, i.e. a bundle of jets of super-sections. It is again a graded manifold which is a bundle over $E$ and, of course, over $T[1]X$. We refer to~\cite{Khudaverdian:2001qe} for general discussion of super-jet bundles and to~\cite{Grigoriev:2019ojp} for precisely this setup.

Let $V$ be a vector field on $E$. Then its prolongation is defined as a unique vector field $\bar V$ on $SJ^\infty(E)$ that projects to $V$ by $\Pi:SJ^\infty(E) \to E$ and whose vertical component is evolutionary. Applying this to $Q$ gives its prolongation $\bar Q$ which we decompose as $\bar Q=D+s$, where $D$ is horizontal and $s$ vertical. Because prolongation is a canonical operation one has $\gh{\bar Q}=1$ and $(\bar Q)^2=0$. Furthermore, because $\bar Q$ projects to $\dx$ under $\pi\circ \Pi$, $SJ^\infty(E)$ equipped with $\bar Q$ and seen as a bundle over $(T[1]X,\dx)$ is again a gauge PDE.  Moreover, it follows $D$ is a horizontal lift of $\dx$, and hence, $D=\theta^a D_a$, where $D_a$ is a total derivative (horizontal lift of $\dl{x^a}$). If $u^A$ are fiber coordinates of $E$ pulled back to $SJ^\infty(E)$, then $su^A=Qu^A-\theta^a D_au^A$. The action of $s$ on the remaining coordinates is determined by $\commut{D_a}{s}=0=\commut{D^\theta_a}{s}$, where $D^\theta_a$ is a horizontal lift of $\dl{\theta^a}$ and the commutator always denote the graded commutator. 

In what follows we need to employ a useful coordinate system on $SJ^\infty(E)$. Given an adapted coordinate system $x^a,\theta^a,u^A$ the associated coordinates are $x^a,\theta^a$, $\psi^A_{a_1\ldots a_l|b_l\ldots b_k}$ which are defined as follows: $\psi^A_{a_1\ldots a_l|b_1\ldots b_k}=D_{a_1}\ldots D_{a_l}\psi^A_{|b_1\ldots b_k}$, $D^\theta_a=\dl{\theta^a}$,  and $\psi^A_{|b_1\ldots b_k}$ coincides with $D^\theta_{b_1}\ldots D^\theta_{b_k} u^A$ at $\theta^a=0$. In particular, $u^A=\psi^A+\theta^a \psi^A_{|a}+\half\theta^a \theta^b\psi^A_{|ba}+\ldots$, for further details see e.g.~\cite{Grigoriev:2019ojp}. Note that in this coordinate system $\commut{D^\theta_0}{s}=\commut{\dl{\theta^a}}{s}=0$, i.e. components of $s$ are $\theta$-independent while $D=\theta^aD_a$. 

The $2$-form $\omega$ also admits a canonical lift $\bar\omega$ to $SJ^\infty(E)$, namely $\bar\omega=\Pi^*(\omega)$. In terms of $\psi^A_{\cdot|\cdot}$-coordinates, $\bar\omega$
is given by $\omega_{AB}(\psi^A(\theta))d\psi^A(\theta)d\psi^B(\theta)$ and can be decomposed into its $\theta$-homogeneous components:
\begin{equation}
\bar\omega=\st{n}{\bar\omega}{}+\st{n-1}{\bar\omega}{}+\ldots+\st{0}{\bar\omega}{}\,.
\end{equation}

Just like presymplectic structure $\omega$ on $E$ we consider $\bar\omega$ on $SJ^\infty(E)$ as a representative of an equivalence class modulo the ideal generated by $dx^a,d\theta^a$ pulled back from $T[1]X$ to  $SJ^\infty(E)$. By some abuse of notations we keep denoting the ideal by $\cI$.
However, in contrast to $E\to T[1]X$, super-jet bundle
$SJ^\infty(E)$ comes equipped with a canonical Cartan distribution, i.e. the horizontal distribution generated by $D_a,D^{\theta}_a$, so that one can pick a vertical component $\bar\omega^v$ as a unique representative of the equivalence class. Note that $\bar\omega^v$ only depends on the equivalence class $[\omega]$ of $\omega$.

A vertical evolutionary vector field on $SJ^\infty(E)$  preserves the subspaces of horizontal and vertical forms.  The same applies to $D=\theta^a D_a$. Because prolongation is canonical $L_{Q} \omega \in \cI$ implies $L_{\bar Q}\bar\omega \in \cI$. Taking a purely vertical part of both sides gives $L_{\bar Q}\bar\omega^v=0$.  Decomposing $\bar\omega^v$ into the $\theta$-homogeneous  components we get the usual descent equation:
\begin{equation}
L_{s}\st{n}{\bar\omega}{}^{v}+L_D \st{n-1}{\bar\omega}{}^{v}=0\,, \quad L_{s}\st{n-1}{\bar\omega}{}^{v}+L_D \st{n-2}{\bar\omega}{}^{v}=0\,, \quad \ldots\,,
\end{equation}
which says that $\st{k}{\bar\omega}{}^{v}$ is $s$-invariant modulo a total derivative. Indeed, the algebra of vertical forms on $SJ^\infty(E)$ can be identified with the algebra of all local forms on $SJ^\infty(E)_0$ which is, by definition, $SJ^\infty(E)$ restricted to the zero section of $T[1]X$ (i.e. in standard coordinates $SJ^\infty(E)_0$ is a submanifold in $SJ^\infty(E)$ singled out by $\theta^a=0$). Moreover, $SJ^\infty(E)_0$ can be seen as a jet-bundle $J^\infty(\bar E)$ of another bundle $\bar E \to X$. Under this identification $\theta^a$ goes to $dx^a$ while $L_D$ goes to $\dh$ so that the above equations coincide with the first equation in~\eqref{descent-tot}. In the coordinate terms, this identification simply amounts to disregarding $dx^a,d\theta$ (which anyway do not enter vertical forms) and then replacing $\theta^a$ with $dx^a$.

In this way we arrive at the presymplectic BV system whose defining data consist of a vertical evolutionary vector field $s$ and a $2$-form $\st{n}{\bar\omega}$ of $\theta$-degree $n$, ghost degree $n-1$, and  satisfying $L_s \st{n}{\bar\omega}+L_D \st{n-1}{\bar\omega}=0$. Indeed, the analysis of Section~\bref{sec:lagBVjet} applies to a degenerate  $\st{n}{\bar\omega}$ as well, and hence there exists a $\theta$-homogeneity $n$ local function $\cL_{BV}$ satisfying:
\begin{equation}
\label{BV-presymp}
\dv \cL_{BV}+i_{s}\st{n}{\bar\omega}{}^v=L_D(\cdot)\,, \qquad  i_s i_s \st{n}{\bar\omega}{}^v=D(\cdot)\,.
\end{equation}
Note that $\cL_{BV}$ is not unique as it is defined modulo $D$-exact terms. Note also that $s,\bar\omega{}^v$ contains extra data with respect to just BV-symplectic form and $s$. Namely, the descent completion of $\st{n}{\bar\omega}{}^v$. 

To arrive at  a genuine local BV system it is again convenient to identify $SJ^\infty(E)_0$ as $J^\infty(\bar E)$. Considered as an $(n,2)$ form on $J^\infty(\bar E)$, form $\st{n}{\bar\omega}{}^v$ is a pull-back from $\bar E$ (does not involve derivatives) and hence we are in the setting of Section~\bref{sec:presympBV}.
It follows that if $\st{n}{\bar\omega}{}^v$ is regular this defines, at least locally, a local BV system on the associated symplectic quotient.

Of course, this BV system captures only part of the initial information. For instance, descent completion of $\st{n}{\bar\omega}{}^v$ is not encoded in this BV system. In this respect, the presymplectic gauge PDE provides a more detailed description. Furthermore,  the data of presymplectic gauge PDE, as defined above, determines $\cL_{BV}$ only modulo $\dh$-exact terms (boundary terms). As we discuss in the next Section, to fix this ambiguity one needs to choose a presymplectic potential $\chi \in \bigwedge^1(E)$ such that $\omega -d\chi\in \cI$. 

It is important to stress that the local BV system encoded in $(E,Q,T[1]X,\omega)$ again defines a gauge PDE. Indeed, forgetting the BV symplectic structure and applying the procedure explained in Section~\bref{sec:EOM-BV} one arrives at the gauge PDE.  This gauge PDE is in general not equivalent to the initial $(E,Q,T[1]X,\omega)$ with $\omega$ forgotten but is weaker (has more inequivalent solutions).  

It is also important to note that a descent-completed BV system discussed in Section~\bref{sec:descent}
gives examples of presymplectic gauge PDE, with $Q$ and $\omega$ being respectively $\dh+s$ and $\omega^{tot}$.  In this way we conclude that any descent-completed BV system can be seen as a presymplectic gauge PDE of a special type. As we are going to see in the next section, the inverse statement is also true in the sense that, at lest locally, a presymplectic gauge PDE  determines a descent-completed BV formulation, provided one has fixed a symplectic potential $\chi$ such that $\omega-d\chi \in\cI$.

%%%%%%%%%%%%%%%%%%%%%%%%%%%%%%%%%%%%
%%%%%%%%%%%%%%%%%%%%%%%%%%%%%%%%%%%
\subsection{AKSZ-like representation of the Lagrangian}
\label{sec:aksz-like}

Let us make an explicit contact with the AKSZ-like representation of the Lagrangian $\hL_{BV}$ or, more precisely, the respective BV master-action. Given a presymplectic gauge PDE
$(E,T[1]X,Q,[\omega])$ let us pick a particular representative $\omega$ of $[\omega]$ and a 1-form $\chi$ (symplectic potential) such that $\omega=d\chi$. In addition, we impose the following  extra technical condition $i_QL_Q\omega=\cI$. It follows, there exists function $\hL$ such that $i_Q\omega+d \hL\in \cI$ so that the full set of relations read as:
\begin{equation}
Q^2=0\,, \qquad \omega=d\chi\,, \qquad i_Q\omega+d\hL\in \cI\,, \qquad 
i_QL_Q\omega \in \cI\,.
\end{equation}
In what follows we refer to the above system as a descent-completed presymplectic gauge PDE and denote it by $(E,Q,T[1]X,\chi,\cL)$. Note that as shown in the Appendix these relations also imply $i_Qi_Q \omega=0$ and $Q\hL=0$.

Let us pick a local trivialization $x^a,\theta^a,\psi^A$ of $E$ and decompose  $Q$ as
\begin{equation}
Q=\dx+q\,, \qquad \dx=\theta^a\dl{x^a}\,,\quad q=q^A(\psi,x,\theta)\dl{\psi^A}\,.
\end{equation}
Of course, the decomposition depends on the choice of trivialization and, in general, is defined only locally. Consider the prolongation of $Q$ to $SJ^\infty(E)$ for $\dx$ and $q$ separately. In the local coordinates $\psi^A_{\cdot|\cdot}$ employed in Section~\bref{sec:pPDE} the prolongation of $\dx$ reads as:
\begin{equation}
(\dx)^{\pr}=D+d^F\,, \qquad D=\theta^a D_a\,,%\quad 
%d^F \Psi^A_{\cdot|}(\theta)= -\theta^a D_a \Psi^A_{\cdot|}(\theta)\,.
\end{equation}
where $d^F$ is a vertical part of $(\dx)^\pr$ defined in Section~\bref{sec:pPDE}.

Let $\Pi$ denote a canonical projection of $SJ^\infty(E)$, considered as a bundle over $E$, i.e. $\Pi:SJ^\infty(E)\to E$. Differential forms and vector fields admit a natural lift from $E$ to $SJ^\infty(E)$: for differential forms the lift is just a pullback by $\Pi$ while for vector fields the lift is given by prolongation. Because by definition prolongation of a vector field $V$ on $E$ is a vector field $V^\pr$ such that it projects to $V$ by $\Pi$ and whose vertical part is evolutionary, one has:
\begin{equation}
i_{V^\pr}(\Pi^*(\alpha))=\Pi^*(i_V\alpha)\,, \qquad \alpha\in \bigwedge{}^l(E)    \,.
\end{equation}
In particular:
\begin{equation}
   i_D \bar\omega + i_{d^F} \bar\omega = \Pi^*(i_{\dx}\omega)\,,\qquad \bar \omega=\Pi^*\omega\,,
\end{equation}
\begin{equation}
   i_{\bar q} \bar\omega + 
   \Pi^*(i_{\dx}\omega)+d\bar \hL \in \cI\,,\qquad \bar q=q^{\pr}\,, \quad \bar \hL=\Pi^* \hL\,.
\end{equation}
Furthermore, $\omega=d\chi$ implies $\bar\omega=d\bar\chi$ so that  
\begin{equation}
    i_D\bar\omega=i_D (d\bar\chi)=
    L_D \bar\chi+d(i_D\bar\chi)\,.
\end{equation}
Combining the above relations one finds:
\begin{equation}
    i_{d^F}\bar\omega+ i_{\bar q} \bar\omega = - d(\bar \hL+i_D\bar\chi)-L_D \bar\chi+\cI\,.
\end{equation}
Taking the vertical part and using $\bar q+d^F=s$ one gets
\begin{equation}
\label{presymp-tot}
    i_{s}{\bar\omega}{}^v= -\dv( { \bar \hL}+i_{D}{\bar\chi})-L_D {\bar\chi}{}^v\,,
\end{equation}
which is nothing but the second   relation~\eqref{descent-tot} with 
$\hL^\tot={ \bar \hL}+i_{D}{\bar\chi}$,
$\omega^\tot={\bar\omega}{}^v$, and
$\chi^\tot={\bar\chi}{}^v$.
The remaining relations in \eqref{descent-tot} are consequences of the above one and $\bar\omega{}^v=\dv \bar \chi{}^v$.

Let us check that~\eqref{master-tot}
is also satisfied.  As demonstrated in the Appendix
$i_Qi_Q\omega=0$ as well as $Q\hL=0$ so that $i_{\bar Q} i_{\bar Q} \bar\omega=\Pi^*(i_{Q} i_{Q} \omega)=0$ and hence:
\begin{equation}
i_s i_s \bar\omega=i_{\bar Q-D} i_{\bar Q-D} \bar\omega=-2i_{\bar Q} i_{D}\bar\omega+ i_D i_D\bar\omega\,.
\end{equation}
Using $\omega=d\chi$ one finds
\begin{equation}
i_D i_D\bar\omega=i_D i_D d\bar\chi=
 i_DL_D \bar\chi+i_D di_D\bar\chi=L_D(i_D\bar\chi)+D(i_D\bar\chi)=2D(i_D\bar\chi)\,,   
\end{equation}
where we made use of $D^2=0$. Furthermore, applying $i_Q$ to both sides of $i_Q\omega+d\hL=\alpha$ one finds that $i_Q\alpha=0$ thanks to $i_Qi_Q\omega=0=Q\hL$. Because $\alpha$ is a 1-form from $\cI$ one has $i_q \alpha=0$ so that $i_{\dx}\alpha=0$ and  
\begin{equation}
i_{\bar Q} i_{D} \bar\omega=-i_D (d \bar \hL+\Pi^*(\alpha))=-D\bar \hL- \Pi^*(i_{\dx}\alpha)=-D\bar \hL\,,
\end{equation}
where we made use of $i_D \Pi^* (\alpha)=\Pi^*(i_{\dx}\alpha)$. All in all this gives:
\begin{equation}
\label{presymp-master-tot}
\half i_s i_s \bar\omega=D(i_D \bar \chi+\bar \hL)\,.
\end{equation}
We conclude that all the relations~\eqref{descent-tot} and \eqref{master-tot}  are satisfied and hence:
\begin{prop}
Decent completed gauge PDE $(E,Q,T[1]X,\chi,\hL)$ defines a structure of the descent-completed BV formulation on $SJ^\infty(E)_0$ with possibly degenerate (in the sense of Section~\bref{sec:presympBV}) presymplectic structure.
\end{prop}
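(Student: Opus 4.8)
The plan is to read off the data of a descent-completed BV system directly from the relations established just above, once $SJ^\infty(E)_0$ is identified with an ordinary jet-bundle. First I would fix this identification, as in Section~\bref{sec:pPDE}: restricting $SJ^\infty(E)$ to the zero section $\theta^a=0$ produces a jet-bundle $J^\infty(\bar E)$ of some bundle $\bar E\to X$, under which $\theta^a$ becomes $dx^a$, the horizontal lift $L_D$ becomes the horizontal differential $\dh$, the vertical differential is $\dv$, and the vertical part $s$ of $\bar Q$ is an evolutionary vector field with $\gh{s}=1$. Vertical forms on $SJ^\infty(E)$ are thereby identified with all local forms on $J^\infty(\bar E)$, so that $\bar\omega^v$ and $\bar\chi^v$ descend to genuine forms there.

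With the dictionary in place I would set
\[
\omega^\tot=\bar\omega^v\,,\qquad \chi^\tot=\bar\chi^v\,,\qquad \hL^\tot=\bar\hL+i_D\bar\chi\,,
\]
and verify the defining equations~\eqref{descent-tot} and~\eqref{master-tot} one by one. The relation $\omega^\tot=\dv\chi^\tot$ is the vertical part of $\bar\omega=d\bar\chi$, itself the canonical lift of $\omega=d\chi$. Equation~\eqref{presymp-tot} is exactly the second relation in~\eqref{descent-tot}. Applying $\dv$ to it and using $\dv\omega^\tot=0$ together with $\omega^\tot=\dv\chi^\tot$ yields the cocycle condition $(\dh+L_s)\omega^\tot=0$, the first relation in~\eqref{descent-tot}. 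Finally, since $s$ is vertical one has $i_si_s\bar\omega=i_si_s\omega^\tot$, so the master equation~\eqref{master-tot} is precisely~\eqref{presymp-master-tot}; its derivation rests on the Appendix identities $i_Qi_Q\omega=0$ and $Q\hL=0$, which follow from the descent-completed conditions $Q^2=0$, $i_Q\omega+d\hL\in\cI$ and $i_QL_Q\omega\in\cI$.

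The one point demanding genuine care is the dictionary itself, i.e. checking that passing to the purely vertical part of a form is compatible with the decomposition $\bar Q=D+s$ and with $d=L_D+\dv$ after the replacement $\theta^a\leftrightarrow dx^a$. The relevant input is that both $D=\theta^aD_a$ and the evolutionary $s$ preserve the splitting of forms into horizontal and vertical pieces (Section~\bref{sec:pPDE}), which is what makes the $\theta$-homogeneous decomposition and hence the descent well defined; granting this, the four relations are immediate transcriptions of the formulas derived above. I would close by observing that $\st{n}{\bar\omega}{}^v$ (equivalently $\omega^\cE$) need not be nondegenerate, but as stressed in Section~\bref{sec:presympBV} the analysis of Section~\bref{sec:lagBVjet} applies verbatim to a degenerate two-form, so the resulting data is a descent-completed BV system with a possibly degenerate presymplectic structure, exactly as claimed.
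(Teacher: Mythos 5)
Your proposal is correct and follows essentially the same route as the paper: the same identification $SJ^\infty(E)_0\cong J^\infty(\bar E)$ with $\theta^a\leftrightarrow dx^a$ and $L_D\leftrightarrow\dh$, the same assignments $\omega^\tot=\bar\omega{}^v$, $\chi^\tot=\bar\chi{}^v$, $\hL^\tot=\bar\hL+i_D\bar\chi$, and the same verification of \eqref{descent-tot} and \eqref{master-tot} via \eqref{presymp-tot}, \eqref{presymp-master-tot}, and the Appendix identities $i_Qi_Q\omega=0$, $Q\hL=0$. The only difference is cosmetic: you spell out how the cocycle condition $(\dh+L_s)\omega^\tot=0$ follows by applying $\dv$ to \eqref{presymp-tot}, a step the paper compresses into the remark that the remaining relations follow from \eqref{presymp-tot} and $\bar\omega{}^v=\dv\bar\chi{}^v$.
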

Here we do not address the issue of how to perform a symplectic reduction in this setup.

Finally, extracting the $\theta$-homogeneity $n$ component of \eqref{presymp-tot}  gives:
\begin{equation}
    i_{s}\st{n}{\bar\omega}{}^v= -\dv( \st{n}{ \bar \hL}+i_{D}\st{n-1}{\bar\chi})-L_D \st{n-1}{\bar\chi}{}^v\,.
\end{equation}
It follows $\st{n}{\bar \hL}+i_{D}\st{n-1}{\bar\chi}$ can be taken as a BV Lagrangian $\cL_{BV}$
on $SJ^{\infty}(E)$. The respective BV action can be written as:
\begin{equation}
\label{aksz-action}
S_{BV}[\hat\sigma]=\int_{T[1]X} \hat\sigma^*(\chi)(d_X) +\hat\sigma^*(\hL) \,, 
\end{equation}
where $\hat\sigma$ is a supersection of $E$.  

Of course, $S_{BV}$ does not depend on the fields parameterizing the kernel of the presymplectic structure and interpreting it as a standard BV action requires passing to the symplectic quotient or imposing additional gauge conditions which set these fields to particular configurations. With these additional gauge conditions the BV action can be used in the path integral just like the conventional BV-AKSZ action. 

The above BV action is a generalization of that for AKSZ sigma-models~\cite{Alexandrov:1995kv}, see also~\cite{Alkalaev:2013hta,Grigoriev:2020xec} for a version applicable to diffeomorphism-invariant systems and~\cite{Grigoriev:2016wmk,Grigoriev:2021wgw} for PDEs equipped with a symplectic current.\footnote{In fact the BV action of Lagrangian formulation proposed in~\cite{Grigoriev:2010ic,Grigoriev:2012xg} also has the same structure.} Note also that in the case $n=1$ and an invertible $\omega$ this construction was already in~\cite{Grigoriev:1999qz}.

To conclude, presymplectic gauge PDEs give a rather flexible  way to encode Lagrangian BV formulation of local gauge systems without employing jet-bundles. It can also be thought of as a generalisation of the BV-AKSZ formalism to the case of not necessarily topological and not necessarily diffeomorphims-invariant systems.

\section{Weak presymplectic gauge PDEs}
\label{sec:weak}
\subsection{Weak presymplectic $Q$-manifolds}

Despite presymplectic gauge PDEs being a rather flexible and invariant way to describe Lagrangian gauge systems, they are not very concise and explicit. For instance, for a nonilinear PDE it is often impossible to find an explicit expression for the underlying $Q$-structure and one only has the existence statement~\cite{Barnich:2010sw}.  An alternative would be to work in terms of the usual Lagrangian BV formalism, where the BV Lagrangian and the symplectic structure come from finite jets and can often be given explicitly, but in this way one is stuck to jet-bundles and hence looses flexibility.

It turns out that the data of a presymplectic gauge PDE can be encoded in a more flexible and often finite-dimensional bundle over $T[1]X$, where condition $Q^2=0$ is relaxed in a controlled way. Let us first illustrate the idea in the case of a toy model of "weak presymplectic gauge PDE" with $\dim{X}=0$.  
\begin{prop}
Suppose we are given with a graded manifold $E$ equipped with a vector field $Q$, $\gh{Q}=1$ and a closed 2-form $\omega$, $\gh{\omega}=l$ satisfying
\begin{equation}
d\omega=0\,, \qquad L_Q \omega=0\,, \qquad i_Q i_Q \omega=0\,,
\end{equation}
for some function $\hL$. If in addition $\omega$ is regular, its symplectic quotient, possibly locally defined, is naturally a symplectic $Q$-manifold. 
\end{prop}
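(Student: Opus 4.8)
The plan is to realise the symplectic quotient concretely as the leaf space of the kernel foliation of $\omega$ and to track how $Q$ and $\omega$ descend to it. First I would introduce the kernel distribution $K=\Ker\omega$, spanned by those (graded) vector fields $v$ with $i_v\omega=0$; regularity of $\omega$ means exactly that $K$ has constant rank. Every $v\in K$ also satisfies $L_v\omega=0$, since by Cartan's formula $L_v\omega$ is built from $d\,i_v\omega$ and $i_v\,d\omega$, both of which vanish. I would then check involutivity: for $v,w\in K$ the graded identity $\commut{L_v}{i_w}=i_{\commut{v}{w}}$ gives $i_{\commut{v}{w}}\omega=L_v i_w\omega\pm i_w L_v\omega=0$, so $\commut{v}{w}\in K$. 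By the (graded) Frobenius theorem $K$ integrates to a foliation, and at least locally the leaf space $\bar E$ is a graded manifold carrying a submersion $p:E\to\bar E$ whose vertical distribution is exactly $K$.

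Next I would descend the two pieces of data separately. The form $\omega$ is horizontal ($i_v\omega=0$ for $v\in K$ by definition) and invariant ($L_v\omega=0$, just noted), hence basic; therefore there is a unique two-form $\bar\omega$ on $\bar E$ with $\omega=p^*\bar\omega$, and it is closed (as $d\omega=0$) and nondegenerate (as we have quotiented precisely by $\Ker\omega$), i.e. $\bar\omega$ is symplectic. To descend $Q$ I would verify that it is projectable along $p$, which amounts to $\commut{Q}{K}\subseteq K$: for $v\in K$ one computes $i_{\commut{Q}{v}}\omega=L_Q i_v\omega\pm i_v L_Q\omega=0$, using $i_v\omega=0$ and the hypothesis $L_Q\omega=0$, so $\commut{Q}{v}\in K$. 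Thus $Q$ pushes forward to a vector field $\bar Q$ on $\bar E$ with $\gh{\bar Q}=1$, and $L_Q\omega=0$ descends to $L_{\bar Q}\bar\omega=0$.

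The decisive step is to convert the weak master equation $i_Q i_Q\omega=0$ into genuine nilpotency of $\bar Q$, the claim being that $Q^2\in K$. First, $L_Q\omega=0$ together with $d\omega=0$ forces $d\,i_Q\omega=0$, so the one-form $i_Q\omega$ is closed. Cartan's formula then writes $L_Q(i_Q\omega)$ as a combination of $d\,i_Q(i_Q\omega)$ and $i_Q\,d(i_Q\omega)$, the first vanishing by hypothesis and the second because $i_Q\omega$ is closed; hence $L_Q(i_Q\omega)=0$. On the other hand the graded identity $\commut{L_Q}{i_Q}=i_{\commut{Q}{Q}}=2\,i_{Q^2}$ applied to $\omega$ gives $2\,i_{Q^2}\omega=L_Q i_Q\omega-i_Q L_Q\omega=L_Q(i_Q\omega)=0$, so $i_{Q^2}\omega=0$, i.e. $Q^2\in K$. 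Since $K=\Ker p_*$, the field $Q^2$ is vertical for $p$ and projects to zero; as $p$-relatedness is preserved by the bracket, $p_*Q^2=\half p_*\commut{Q}{Q}=\half\commut{\bar Q}{\bar Q}=\bar Q^2$, whence $\bar Q^2=0$. Combined with the previous paragraph this exhibits $(\bar E,\bar\omega,\bar Q)$ as a symplectic $Q$-manifold; moreover $i_{\bar Q}\bar\omega$, being closed, is locally exact, which produces the Hamiltonian $\hL$ with $i_{\bar Q}\bar\omega=d\hL$.

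I expect the only genuine obstacle to be the quotient step itself, namely establishing that the regular kernel foliation of a closed graded two-form admits a smooth leaf space in the graded category. This is exactly why the statement is phrased locally: one may straighten $K$ by adapted Darboux-type coordinates, after which the quotient is manifest and $\bar\omega$ takes a standard nondegenerate form. The sign bookkeeping in the graded Cartan identities is routine once the conventions $\p{i_Q}=\p{Q}+1$ and $\commut{Q}{Q}=2Q^2$ are fixed, and none of it affects the vanishing statements used above.
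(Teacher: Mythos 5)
Your proof is correct and takes essentially the same route as the paper: quotient by the kernel distribution $\cK=\Ker\omega$ (which is $Q$-invariant thanks to $L_Q\omega=0$), descend both $Q$ and $\omega$, and convert the weak master equation $i_Qi_Q\omega=0$ into nilpotency of the projected field. The only cosmetic difference is that you establish $i_{Q^2}\omega=0$ upstairs on $E$ before projecting, whereas the paper descends all structures first and then invokes nondegeneracy of the induced symplectic form on the quotient --- the same computation performed in a different order.
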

\begin{proof}
Let $\cK \subset TE$ be a kernel distribution of $\omega$. For any $K\in \cK$ one has $\commut{K}{Q}\in \cK$ thanks to  $L_Q\omega=0$. For any function $f$ satisfying $\cK f=0$  one has $\cK Qf=0$ so that $Q$ induces a vector field on the symplectic quotient.  Let $p:E\to E^\prime$ denote a projection to the symplectic
quotient and $Q^\prime$ and $\omega^\prime$ denote the respective structures induced on the symplectic quotient. It follows 
$\omega=p^*\omega^\prime$ and $p^*\circ Q^\prime= Q\circ p^*$, i.e. $Q$ projects to $Q^\prime$. It then follows that $i_{Q^\prime} i_{Q^\prime}\omega^\prime=0$, $L_{Q^\prime}\omega^\prime=0$, and also $i_{Q^\prime}\omega^\prime+d \hL^\prime=0$ for some $\hL^\prime$. Because $\omega^\prime$ is symplectic, one concludes that $(Q^\prime)^2=0$. In particular, Hamiltonian $\hL^\prime$ of $Q^\prime$ satisfies master equation $\pb{\hL^\prime}{\hL^\prime}=0$, where the bracket denotes the (graded) Poisson structure determined by $\omega^\prime$.
\end{proof}

\subsection{Weak presymplectic gauge PDEs}
The notion and the statement can be generalized to the case where
$\dim{X}>0$ by replacing $Q^2=0$ with in general weaker condition $i_Qi_Q\omega=0$ and keeping the remaining conditions intact.
This gives a more flexible version of the presymplectic BV-AKSZ formalism developed in the previous section. 
More precisely, a weak presymplectic gauge PDE is a $\fZ$-graded fiber bundle over $T[1]X$, equipped with a vector field $Q$ satisfying $\gh{Q}=1$, $Q\circ \pi^*=\pi^* \circ \dx $  and a 2-form $\omega$ of degree $n-1$, $n=\dim{X}$ satisfying:
\begin{equation}
d\omega=0,\qquad  L_Q\omega \in \cI\,,\qquad
i_Qi_Q\omega=0\,,\qquad i_Q L_Q \omega\in \cI\,.
\end{equation}
In addition $\omega$ is assumed regular (in the same sense as in Section~\bref{sec:pPDE}, i.e. that the induced $\st{n}{\bar\omega}{}^v$ is regular on $SJ^\infty(E)$). This definition can be made even more flexible but it is enough to illustrate the idea. Note that the above relations imply $i_{Q^2}\omega \in \cI$.

In order to see that such system still defines a local BV system, let us observe that in the analysis 
of Section~\bref{sec:aksz-like}
we only employed $Q^2=0$ to derive $i_Q i_Q\omega=0$ and $Q\hL=0$. 
Now $i_Q i_Q\omega=0$ by assumption while $Q\hL=0$ by the same reasoning as in the Appendix. Picking $\chi$ and $\hL$
such that $\omega=d\chi$ and $i_Q\omega+d\hL\in \cI$ and repeating the considerations of Section~\bref{sec:aksz-like} in this more general situation we arrive at equations \eqref{presymp-tot} and \eqref{presymp-master-tot} and hence to a descent-completed presymplectic BV system. Because we assumed $\st{n}{\bar\omega}{}^v$ regular this in turn defines a local BV system.

A remarkable feature of such an extension of the presymplectic BV-AKSZ formalism is that it is often enough to consider finite-dimensional $E$ and there is no need to work with infinite-dimensional manifolds.  It does not mean that one should necessarily relax $Q^2=0$ to achieve that, as in some cases, including Maxwell field and Einstein gravity, there exist finite-dimensional presymplectic gauge PDEs encoding the full-scale BV formulation~\cite{Grigoriev:2020xec,Dneprov:2022jyn} (without BV extension similar examples were already in~\cite{Alkalaev:2013hta}).

\subsection{Example}
\label{sec:YM}
It is not difficult to give a nontrivial example of the weak presymplectic AKSZ system.  Consider the following weak presymplectic Q-manifold: coordinates are $x^a,\theta^a,C,F^{ab}$, where $\theta^a,C$ are of degree 1 and coordinates $C,F^{ab}$ take values in a reductive real Lie algebra. The weak $Q$-structure is determined by:
\begin{equation}
Qx^a=\theta^a\,, \quad Q C=-\half\commut{C}{C}+\half \theta^a\theta^b F_{ab}\,, \quad QF_{ab}=\commut{F_{ab}}{C}   \,. 
\end{equation}
Note that $Q$ is not nilpotent: $Q^2 F_{ab}$ is proportional to $\commut{F_{ab}}{\theta^c\theta^dF_{cd}}$. 
Of course, this $Q$-manifold is just a truncation of the minimal BRST complex of Yang-Mills theory, see e.g.~\cite{Brandt:1996mh}. 

As a presymplectic potential we take~\cite{Alkalaev:2013hta}:
\begin{equation}
\chi=\tr (\theta^{(n-2) F^{ab} }_{ab} dC)\,, \qquad
(\theta)^{(n-k)}_{a_1\ldots a_k}=\frac{1}{(n-k)!}\epsilon_{a_1\ldots a_k b_1\ldots b_{n-k}}\theta^{b_1}\ldots \theta^{b_{n-k}}\,.
\end{equation}
where the indexes of $F$ are raised and lowered by the constant Minkowski metric $\eta_{ab}$. It is easy to check that $i_Qi_Q\omega=0$, where $\omega=d\chi$. The "covariant Hamiltonian" $\hL$ determined through
$i_Q\omega+d\cL\in \cI$ reads as
\begin{equation}    
\hL=\half\tr(F^{ab}\commut{C}{C})\theta^{(n-2)}_{ab}-\half\tr(F_{ab}F^{ab})\theta^{(n)}\,.
\end{equation}

The BV-AKSZ action~\eqref{aksz-action} takes the form of the usual first order formulation of YM action:
\begin{equation}
S_{BV}=\int d^n x (F^{ab}(\d_a A_b-\d_b A_a+\commut{A_a}{A_b})-\frac{1}{2}F_{ab}F^{ab})+\ldots\,,  \end{equation}
where we only listed explicitly the terms not involving variable of nonvanishing ghost degree, $A_a(x)\theta^a=\sigma^*(C)$, and by some abuse of notations $F^{ab}(x)=\sigma^*(F^{ab})$.  Passing to the simplectic quotient gives the respective BV field-antifield space. The last step is completely analogous to the one in the case of the Maxwell field considered in~\cite{Dneprov:2022jyn}, to which we refer for further details.  

\section{Induced gauge PDEs
on submanifolds and boundaries}\label{sect:boundary}

As we have seen, presymplectic gauge PDEs can be considered as a far-going generalization of AKSZ sigma models. It is well-known that for a given AKSZ model one can replace the space-time manifold with its submanifold, or add extra dimension(s) without spoiling consistency. A typical example is given by $n=1$ AKSZ sigma model whose restriction to a point (i.e. just the target space) turns out to be the respective  BFV phase space~\cite{Grigoriev:1999qz}. More generally,  the Hamiltonian BFV description of generic AKSZ models can be obtained by restricting the model to a codimension 1 space-like submanifold~\cite{Barnich:2003wj}. This extends to a rather general gauge theories by representing them as AKSZ sigma models using so-called parent formulation approach~\cite{Grigoriev:2010ic,Grigoriev:2012xg}. In fact the Lagrangian parent formulation of these Refs. is a special class of presymplectic gauge PDEs. In the non-Lagrangian setup, gauge PDEs give a natural tool to study induced theories on submanifolds and (asymptotic) boundaries, as was demonstrated in~\cite{Bekaert:2012vt,Bekaert:2013zya,Chekmenev:2015kzf} (see also~\cite{Bekaert:2017bpy,Grigoriev:2018wrx} for further applications) in the context of AdS/CFT correspondence for higher-spin gauge theories. Let us also mention an alternative approach to BV systems on manifolds with boundaries~\cite{Cattaneo:2012qu,Cattaneo:2015vsa,Canepa:2020ujx,Simao:2021xgw} (see also Refs. therein), which operates in the usual BV setup and does not explicitly employ (presymplectic) gauge PDEs.

The simplest situation is where $(E,Q,T[1]X)$ is a gauge PDE and we specify a submanifold $\Sigma \subset X$  (e.g. $\Sigma$ can be a boundary of $X$) of its  space-time manifold $X$. There is a naturally defined induced gauge PDE structure on $T[1]\Sigma$. Indeed, $T[1]\Sigma$ can be seen as a $Q$-submanifold of $T[1]X$, i.e. $\dx$ is tangent to $T[1]\Sigma$. Introducing local coordinates $x^i,z^\alpha$ adapted to the embedding so that $\Sigma$ is singled out by $z^\alpha=0$, submanifold $T[1]\Sigma \subset T[1]X$ is singled out by $z^\alpha=0,\dx z^\alpha=0$.

Let $E_\Sigma \to T[1]\Sigma$ be a restriction of $E$ to 
$T[1]\Sigma \subset T[1]X$. It is easy to see that $Q$ is tangent to $E_\Sigma$, seen as a submanifold of $E$, and hence defines a $Q$-structure $Q_\Sigma$ on $E_\Sigma$, which projects to $\mathrm{d_\Sigma}$ and hence defines a new gauge PDE $(E_\Sigma, Q_\Sigma,T[1]\Sigma)$. In the case where $\Sigma$ is a boundary  of $X$ this can be regarded as a theory of boundary values of the fields described by $(E,Q,T[1]X)$. This is precisely the procedure applied in~\cite{Bekaert:2012vt,Bekaert:2013zya} to study boundary values of higher-spin fields in the AdS/CFT setup. 

Let us now turn to presymplectic gauge PDEs. The pullback $\omega_\Sigma$ of presymplectic structure $\omega$ to $E_\Sigma$ gives a presymplectic structure in $(E_\Sigma,Q_\Sigma,T[1]\Sigma)$. However, the ghost degree of $\omega_\Sigma$ is $\dim{X}-1$  and not $\dim{\Sigma}-1$ so that it is not a usual presymplectic gauge PDE. For $\Sigma$ of codim $1$ we are in fact dealing with 
 a version of Hamiltonian BFV system, as was initially observed in~\cite{Grigoriev:1999qz,Barnich:2003wj} in the AKSZ setup. 

 We refrain from giving here a general framework for presymplectic gauge PDEs induced on submanifolds and boundaries but  consider \Ruth{instead} a weak presymplectic gauge PDE, describing Yang-Mills system and presented in Section~\bref{sec:YM}. For simplicity we assume that $X$ is a 4-dimensional Minkowski space and take $\Sigma$ to be the hyperplain $x^0=0$. We use $x^i,\theta^i$, $i=1,2,3$ to denote standard coordinates on $T[1]\Sigma$. The induced symplectic potential reads explicitly as:
\begin{equation}
\chi_\Sigma=\tr (F^{0i}\epsilon_{ijk}\theta^j\theta^k\, dC)\,.
\end{equation}
The form $\st{3}{\bar\omega}{}_\Sigma^v$ induced by $\omega_\Sigma=d\chi_\Sigma$ on $SJ^\infty(E_\Sigma)$ (see Section~\bref{sec:pPDE}) reads explicitly as
\begin{equation}
\st{3}{\bar\omega}{}_\Sigma^v=2\tr \left(d\st{0}{F}{}^{0i}dA_i+d\st{1}{F}{}^{0i}_i d\st{0}{C}\right)\theta^{(3)}\,,
\end{equation}
where we introduced coordinates $\psi^A(\theta)$ via
\begin{equation}
\begin{aligned}
F^{ab}(\theta)&=\st{0}{F}{}^{ab}+\st{1}{F}{}^{ab}_j\theta^j+\half\st{2}{F}{}^{ab}_{jk}\theta^j\theta^k\,,\\
C(\theta)&=\st{0}{C}+A_j\theta^j+\half\st{2}{C}_{jk}\theta^j\theta^k\,.
\end{aligned}
\end{equation}
This gives a presymplectic structure $\omega^{\bar E_\Sigma}$ on the fibers of the bundle $\bar E_\Sigma \to X$, defined in such a way that $SJ^{\infty}(E_\Sigma)_0=J^\infty(\bar E_\Sigma)$. 
Furthermore, all the coordinates save for $\cC=\st{0}{C}$, $A_i$, $\pi^i=\st{0}{F}{}^{0i}$ and $\cP=\st{1}{F}{}^{0i}_i$ are in the kernel and the symplectic reduction gives us the reduced bundle $\bar \cE_\Sigma$, equipped with the  canonical symplectic structure
\begin{equation}
\label{can-YM-symp}
\omega^{\bar \cE_\Sigma} = 2 \tr(d\pi^idA_i+d\cC d \cP)(dx)^3\,,
\end{equation}
where we switched to the conventions of Section~\bref{sec:lagBVjet}.  It is easy to see that the space of sections of $\bar\cE_\Sigma$ coincides with the phase space of the constrained Hamiltonian system formulation of Yang-Mills theory.  Indeed, $\pi^i$ are momenta conjugate to $A_i$ and $\cC$ and $\cP$ are canonically conjugated ghosts and ghost momenta. Furthermore,~\eqref{can-YM-symp} induces a canonical symplectic structure on the space of sections of~$\bar\cE_\Sigma$.

Applying~\eqref{aksz-action}  in the present situation  (and having in mind that the degree of $\omega_\Sigma$ is different) we get
\begin{equation}
 \Omega[A,\pi,\cC,\cP]= \int_{T[1]\Sigma}(\pi^*(\chi_\Sigma)(d_\Sigma)+\pi^*(\cL))=
 -2\int d^3x \tr(\pi^i(\d_i\cC+\commut{A_i}{\cC})-\cP \commut{\cC}{\cC})\,,
\end{equation}
which is proportional to the standard expression for the BFV-BRST charge of YM theory. Obtaining BRST-invariant Hamiltonian is not so straightforward in this case because it is not 1-st order in spatial derivatives and hence its derivation involves extra steps. In fact, "covariant Hamiltonian" $\hL$ defined on the initial gauge PDE encodes both the BFV-BRST charge and the covariant analog of the usual Hamiltonian.

\section*{Acknowledgments}
\label{sec:Aknowledgements}
%%%%%%%%%%%%%%%%%%%%%%%%%%%%%%%%%%%%%%%%%%%%%%%%%%%%%%%%%%%%%
I wish to thank  I.~Dneprov, K.~Druzhkov, J.~Frias, M.~Henneaux, I.~Krasil'shchik, A.~Sharapov, E.~Skvortsov, A.~Verbovetsky for fruitful discussions. I am garateful to A.~Kotov for the collaboration at the early stage of this project.
Part of this work was done when the author participated in the thematic program "Higher Structures and Field Theory" at the Erwin Schrödinger International Institute for Mathematics and Physics, Vienna, Austria. This work was partially completed while the author participated in  the workshop "Higher Spin Gravity and its Applications" supported by the Asia Pacific Center for Theoretical Physics, Pohang, Korea.

\appendix

\section{Technical details}

\subsection*{Descent "covariant Hamiltonians"}
\label{app:BVjet}
In the setting of Section~\bref{sec:lagBVjet} consider $\dv(i_si_s \st{n}{\omega})$. We have
\begin{equation}
\label{des3}
\dv(i_s i_s \st{n}{\omega})=-2i_s L_s \st{n}{\omega}=2i_s\dh \st{n-1}{\omega}=2\dh i_s\st{n-1}{\omega} \,.
\end{equation}
Observe that $i_s\st{n-1}{\omega}$ satisfies: $\dv i_s\st{n-1}{\omega}=-L_s\st{n-1}{\omega}=\dh \st{n-2}{\omega}$ where in the last equality we made use of the descent equation. It follows from triviality of $\dv$ mod $\dh$ cohomology in this degree that
$i_s\st{n-1}{\omega}=-\dv \st{n-1}{\hL}-\dh \st{n-1}{\chi}$ for some $(n-1,0)$-form$ \st{n-1}{\hL}$ and $(n-1,1)$ form $\st{n-1}{\chi}$. Using this in~\eqref{des3} gives
\begin{equation}
    \dv(i_s i_s \st{n}{\omega})=2\dv\dh \st{n-1}{\hL}
\end{equation}
which in turn implies $(\half i_s i_s \st{n}{\omega}-\dh \st{n-1}{\hL})=\pi^*f$  for some $f(x,dx)$. However because $i_s i_s \st{n}{\omega}$ is of degree $n+1$ such $f$ should vanish, giving~\eqref{master}.  

\subsection*{Checking $i_Qi_Q\omega=0$}
\label{app:qqo}
In the setup of definition~\bref{def:presymppde} let $\omega$ be a fixed representative of the vertical form $[\omega]$. Assume in addition that the representative is such that 
\begin{equation}
\label{extra}
i_Q L_Q \omega \in \cI\,, \qquad i_Q d\omega \in \cI\,, \qquad i_Q i_Q d\omega \in \cI\,.
\end{equation}
Let $\hL$ be a covarinat Hamiltonian defined through $i_Q \omega+d\hL \in \cI$. One has
\begin{equation}
d(Q\hL)=d(i_Q d\hL)=L_Q (d\hL)=L_Q(-i_Q \omega+\cI)\in \cI  \,. 
\end{equation}
Because $d(Q\hL)$ is a 1-form $d(Q\hL) \in \cI$ implies $d(Q\hL)=d\pi^*(f)$, $f\in \cC^\infty (T[1]X)$.  However, $\gh{Q\hL}=n+1$ but on $T[1]X$ there are no functions of degree $n+1$ so that $Q\hL=0$.

Furthermore,
\begin{equation}
d (i_Q i_Q \omega)= i_Q d i_Q \omega -L_Q i_Q\omega=
i_Qi_Q d \omega-i_Q L_Q \omega-i_{\commut{Q}{Q}}\omega-i_QL_Q\omega\in \cI\,,
\end{equation}
where we made use of~\eqref{extra} and $Q^2=0$. Just like in deriving $Q\hL=0$ we use $\gh{i_Q i_Q \omega}=n+1$ to see that, again, $i_Q i_Q \omega=0$. Note that if we impose $i_Q i_Q \omega=0$
in place of $Q^2=0$ the last relation implies $i_{Q^2}\omega \in \cI$.

\bibliographystyle{utphys}
\setlength{\itemsep}{5pt}
\small
%\bibliography{/home/maxim/Documents/HSmaster}
%\printbibliography
\providecommand{\href}[2]{#2}\begingroup\raggedright\endgroup

\end{document}